\documentclass{article} 
\usepackage[font=footnotesize, labelfont=bf]{caption}
\usepackage{amsmath,amsfonts,amssymb,amsthm}
\usepackage{hyperref}
\usepackage{multirow}
\newtheorem{remark}{Remark}
\newtheorem{proposition}{Proposition}

\usepackage[sort&compress,nameinlink,noabbrev,capitalize]{cleveref}
\newif\ifarxiv
\arxivfalse
\usepackage{etoolbox}
\usepackage{nicefrac}
\usepackage{booktabs, adjustbox}
\newcommand{\appsymb}{$\bigstar$}
\newcommand{\appref}[1]{{\appsymb}}

\usepackage{xcolor}

\newcommand{\gas}{\text{gas}}
\newcommand{\stake}{\text{stake}}
\newcommand{\gasprice}{\text{gasprice}}

\begin{document}
\pagestyle{plain}
\title{An Adaptive Multichain Blockchain: \\ A Multiobjective Optimization Approach}
\author{%
\begin{tabular}[t]{c}
Nimrod Talmon
\\[2pt]
{\small Ben-Gurion University}\\[1pt]
{\small \texttt{talmonn@bgu.ac.il}}
\end{tabular}
\and
\begin{tabular}[t]{c}
 Haim Zysberg\\[2pt]
{\small Independent}\\[1pt]
{\small \texttt{haimzys@gmail.com}}
\end{tabular}
}
\maketitle\maketitle

\begin{abstract}
Blockchains are widely used for secure transaction processing, but their scalability remains limited, and existing multichain designs are typically static even as demand and capacity shift. We cast blockchain configuration as a multiagent resource-allocation problem: applications and operators declare demand, capacity, and price bounds; an optimizer groups them into \emph{ephemeral chains} each epoch and sets a chain-level clearing price. The objective maximizes a governance-weighted combination of normalized utilities for applications, operators, and the system. The model is modular—accommodating capability compatibility, application-type diversity, and epoch-to-epoch stability—and can be solved off-chain with outcomes verifiable on-chain. We analyze fairness and incentive issues and present simulations that highlight trade-offs among throughput, decentralization, operator yield, and service stability.
\end{abstract}

\section{Introduction}

Blockchains provide transparency and security but remain rigid in design and operation, which limits their ability to respond to dynamic workloads, heterogeneous applications, and operator volatility. Multichain architectures, such as Polkadot~\cite{wood2016polkadot} and COSMOS~\cite{kwon2019cosmos} address scalability by distributing load across parallel chains, yet they are typically static: chains are predefined and remain fixed regardless of changing demand or capacity (and apps and operators join them manually). As a result, performance may degrade, scalability is constrained, gas pricing may become imbalanced~\cite{roughgarden2021transaction}, and user experience is restricted, as applications cannot rely on the architecture to adapt transparently to their needs (causing also a likely underutilization of operator resources).  
This rigidity arises both because chains are interdependent (via shared security, messaging, and tooling) and from governance inertia, making real-time adaptation difficult even when spare capacity exists somewhere in the system.

In practice, this rigidity is increasingly visible: when activity spikes on a few high-demand applications, transaction fees rise network-wide while other chains remain idle; operators must often reconfigure infrastructure manually, leading to delays and inconsistent pricing; and, moreover, long-lived chain assignments prevent efficient matching between applications and operators, hindering innovation and reducing resilience to failures or governance changes.
These frictions highlight the need for architectures that can self-adjust and reallocate resources dynamically. (E.g., dynamic sharding~\cite{tennakoon2022dynamic}.)

We propose a framework for adaptive multichain blockchains that reconfigure in response to changing demand and operator capacity, while also incorporating forecasts and governance decisions. 
At its core is a multiobjective optimization model in which applications, operators, and users are treated as agents with distinct objectives and engage in bids reflecting resource requirements and availability. 
The optimization mechanism groups the applications and operators into ephemeral chains with shared parameters such as gas prices and operational constraints, and integrates application, operator, and system utilities to balance heterogeneous interests.  
These reconfigurations occur periodically or in response to performance triggers, enabling the system to adapt to real-time fluctuations while preserving stability across epochs.
%
By combining economic bidding, algorithmic optimization, and policy-level control, the framework provides a principled foundation for self-organizing blockchain infrastructures.

We formalize this multiagent model, analyze its properties with respect to adaptivity, efficiency, and strategyproofness, and demonstrate its modular design through extensions capturing service degradation, stability, and application diversity. We analyze certain normative aspects, including fairness and strategyproofness, and show feasibility through simulations illustrating tradeoffs between agent utilities. 
Our experiments demonstrate computational feasibility at realistic scales and reveal how governance parameters can modulate the relative influence of applications, operators, and the system, providing direct control over multiagent priorities and tradeoffs.

Our results highlight how adaptive optimization can reconcile competing objectives under realistic constraints, revealing emergent equilibria between stakeholder groups.
Beyond specific design choices, the framework offers a general methodology for reasoning about coordination in dynamic, partially decentralized environments.
By treating blockchain configurability as a multiagent resource allocation problem, the work contributes to the study of coordination and mechanism design in multiagent systems.
To our knowledge, this is the first formal model that treats chain formation itself as a multiobjective optimization balancing application, operator, and system utilities, providing a general methodology for adaptive blockchain configuration.

\paragraph{Scope and contribution}
This paper focuses on a single-epoch optimization problem underlying our adaptive multichain configuration. The goal is to: (1) introduce a multiagent architecture in which applications, operators, and the system are modeled as interacting stakeholders; (2) formalize a modular optimization model for forming ephemeral chains under feasibility constraints; and (3) demonstrate baseline computational feasibility and governance-adjustable trade-offs in simulation. Broader directions—such as full multi-epoch equilibrium analysis, alternative normalization and aggregation schemes, richer solver–market couplings, and stronger incentive guarantees—are important but left for future work. See Section~\ref{section:outlook} for a more thorough discussion on future research directions.

\section{Related Work}

Our work connects to research on multiagent resource allocation, coalition formation, and mechanism design~\cite{shehory1998task,sandholm1999coalition,brandt2016handbook}. In our setting, applications and operators act as agents with heterogeneous objectives, and assignments determine both feasibility and induced gas prices. This relates to strategyproof mechanism design and auction-based allocation~\cite{nisan2007algorithmic,parkes2001iterative,clarke1971multipart,groves1973incentives}, as well as fairness notions such as Dominant Resource Fairness (DRF)~\cite{ghodsi2011drf} and lottery-based allocation~\cite{kleinberg2018fairness,segalhalevi2020fair}. 
Vickrey–Clarke–Groves (VCG) mechanisms are a canonical benchmark for truthful reporting, but they are a poor fit here. Our allocation is a combinatorial \emph{chain-formation} problem with hard feasibility constraints, making welfare optimization (and pivotal payments) expensive at scale (NP-hard in particular, as we show below), and VCG additionally requires explicit transfers that must be funded and enforced within protocol economics. Moreover, the mechanism is inherently repeated across epochs, so one-shot truthfulness is not the sole design target. Accordingly, we do not pursue VCG-style guarantees; instead, we rely on a transparent optimization objective with governance-controlled weights, coupled with accountability (collateral-backed penalties and auditing) to discourage profitable misreporting over time.

Our work is tightly coupled with blockchain scalability via multichain designs, including relay-chain architectures such as Polkadot~\cite{wood2016polkadot}, interoperability frameworks such as Cosmos~\cite{kwon2019cosmos}, Ethereum rollups~\cite{buterin2020eth2}, and ICP subnets~\cite{williams2021icp}. 
These approaches improve throughput but rely on static or governance-driven provisioning, whereas our framework introduces ephemeral, optimization-driven chains that evolve automatically based on observed demand and operator capacity, integrating economic and governance feedback into the reconfiguration process.
To our knowledge, this is the first model that treats chain formation itself as a multiobjective optimization problem balancing application, operator, and system utilities.

Finally, our use of predictive modeling connects to research on self-adaptive systems~\cite{weyns2012survey,wong2022self} and ML-based scheduling and load prediction~\cite{chen2018machine}.
Machine learning in our setting enhances adaptivity rather than replacing optimization: forecasts of gas usage and operator performance are used to guide reallocation decisions, enabling responsiveness without centralized control.

\section{Preliminaries}

We outline the blockchain context for our model and its framing as a multiagent resource allocation problem.

\subsection{Blockchain Setting}

A blockchain is a distributed ledger that records transactions and executes applications commands. Here we concentrate on Proof of Stake blockchains, which are maintained by operators, who contribute computational capacity and lock stake as a security guarantee. 
Applications submit transactions that consume \emph{gas}, a standardized unit of computation, and attach fees to have them processed.
Gas prices are typically determined by congestion and market bidding, creating dynamic tension between demand, operator capacity, and user costs.
Different chains may emphasize different properties, such as general-purpose execution, privacy-preserving transactions, or support for zero-knowledge proofs.
In multichain ecosystems, these variations give rise to heterogeneous workloads and incentives, motivating the need for coordination and adaptive allocation across chains.

We distinguish three classes of stakeholders in a blockchain: 
(1) applications, which demand gas under budget, security, and operational constraints; (2) operators, which supply capacity and stake subject to their capabilities; and (3) the system at large, representing users and governance, which values overall performance, security, decentralization, and diversity of applications. This perspective motivates casting the problem as a resource-allocation framework, as developed in the next section.

\subsection{System Dynamics}

Most blockchain systems evolve in discrete \emph{epochs}, each serving as a checkpoint where configurations may be updated (due to agents monitoring of the runtime dynamics and changing preferences). 
At these points (following some computational latency), applications can be reassigned, operators can change roles, and resources can be redistributed in response to shifting demand and capacity. Such reconfiguration enables adaptation but also incurs costs, including temporary downtime and reduced stability.

From a multiagent perspective, epochs can be seen as repeated rounds of allocation, where agents may change their preferences and the optimizer must weigh the possible gains from improved matching against the disruptions caused by change.

\subsection{Practical Assumptions}

Our framework relies on a small set of practical assumptions:
\begin{itemize}
    \item \textbf{Distribution and consensus.}
    We assume the mechanism is distributed as a protocol and configurations disseminated as signed \emph{configuration proposals} for the next epoch. A proposal becomes authoritative once it is finalized on-chain at the epoch boundary (e.g., via a governance/checkpoint transaction).

    \item \textbf{Identifiability and accountability.}
    We assume applications and operators are identifiable at the relevant protocol layer,
    e.g., via a registry that binds an application/operator identifier to one or more public keys (and, when needed, to an on-chain deposit). Identifiability is used only to support declarations (gas demand/capacity, stake, and price bounds) and accountability.
    Furthermore, we assume declarations are backed by collateral and enforceable via protocol penalties. In particular, persistent infeasibility, non-delivery (e.g., an operator failing to provide declared capacity), or detectable manipulation of declared parameters can trigger slashing or other penalties drawn from posted collateral.     

    \item \textbf{Predictive and auditing layer.}
    We assume observable on-chain data can be used to complement declarations, e.g., via forecasts of gas usage and operator performance. The same data can support governance-supervised auditing,
    including anomaly detection for suspicious usage patterns or repeated feasibility misreports.    

    Concretely, ML can be used (1) to predict near-term demand and operator performance, (2) to impute missing or noisy declarations from historical behavior, (3) to propose governance-level adjustments to parameters (e.g., $\gamma$ for overprovisioning or the relative weights used in normalization/penalties) based on recent operating conditions, and (4) to flag anomalous patterns indicative of slack exploitation, coordinated under-reporting, or repeated feasibility misreports. (The design and evaluation of such ML machinery is outside the scope of the current paper.)

    Technically, we assume that such ML can be executed off-chain, with outcomes made verifiable on-chain (see discussion later as well).
    
    \item \textbf{Global optimization feasibility.} 
    We assume that a global optimization can be executed off-chain, with outcomes made verifiable on-chain (e.g., via zero-knowledge proofs; exact details are out of scope). Furthermore, appropriate incentives and penalties ensure 
    the integrity of this verification layer.
\end{itemize}

These assumptions reflect mechanisms already common in blockchain practice,\footnote{Except for application identifiability, which is not so common, but we provide for it.} such as staking declarations, on-chain transparency, and the use of off-chain computation with cryptographic verification.\footnote{The latter is still ongoing research, but we have a full epoch between optimization runs, which makes our situation easier.}

\section{Core Model of Ephemeral Chains}

We now formalize the core model, in which applications and operators are assigned to ephemeral chains at each epoch; later we describe modular extensions to it.

\subsection{Notation}

We have applications, operators, and ephemeral chains: 
$\mathcal{APP}$ denotes the set of applications, 
$\mathcal{OP}$ the set of operators, 
and $\mathcal{CHAIN}$ the set of ephemeral chains formed in a given epoch.

Each chain is implicitly defined by the applications and operators assigned to it; chains have no independent identity outside these assignments.

\subsection{Decision Variables}

For each application $a \in \mathcal{APP}$ and chain $c \in \mathcal{CHAIN}$, 
let $x_{a,c} \in \{0,1\}$ indicate whether $a$ is assigned to $c$. 
For each operator $o \in \mathcal{OP}$ and chain $c \in \mathcal{CHAIN}$, 
let $y_{o,c} \in \{0,1\}$ indicate whether $o$ is assigned to $c$ (note that the number of ephemeral chains is naturally upper bounded by $\min(|\mathcal{OP}|, |\mathcal{APP}|)$, and typically lower as we bundle apps and operators together to form a single ephemeral chain).
Each agent (apps/ops) may be assigned to at most one chain.\footnote{For apps this is intuitive; for operators, we simplify by considering each operator as a single computational entity (this assumption may be lifted).}

\subsection{Agent Inputs}

Each application $a \in \mathcal{APP}$ declares its:
\begin{itemize}
    \item gas demand, denoted $\text{gas}_{a} > 0$,
    \item required stake, denoted $\text{stake}_{a} > 0$,
    \item maximum acceptable gas price, denoted $\text{gasprice}_{a} > 0$.
\end{itemize}
Each operator $o \in \mathcal{OP}$ declares its:
\begin{itemize}
    \item gas capacity, denoted $\text{gas}_{o} > 0$,
    \item available stake, denoted $\text{stake}_{o} > 0$,
    \item minimum acceptable gas price, denoted $\text{gasprice}_{o} > 0$.
\end{itemize}

These declarations can be viewed as preferences or bids: 
applications bid for gas under budget constraints, 
while operators bid to supply capacity under compensation requirements.

\subsection{Derived Chain Properties}

Given some assignment, it implicitly declares the ephemeral chains that are opened; for each chain $c \in \mathcal{CHAIN}$ we derive:
\begin{itemize}
    \item Applications assigned: $$\mathcal{APP}_{c} = \{ a \mid x_{a,c}=1 \}\ ,$$
    \item Operators assigned: $$\mathcal{OP}_{c} = \{ o \mid y_{o,c}=1 \}\ ,$$
    \item Total stake: $$\text{Stake}_{c} = \sum_{o \in \mathcal{OP}_{c}} \text{stake}_{o}\ ,$$
    \item Total demand: $$\text{Demand}_{c} = \sum_{a \in \mathcal{APP}_{c}} \text{gas}_{a}\ ,$$
    \item Total supply:\footnote{We use $\min_{o\in\mathcal{OP}_c}\text{gas}_o$ as a simple, conservative proxy for protocol-limited throughput:
in fully synchronized block production, progress is bounded by the slowest participating operator, so summing or averaging
would overestimate throughput by assuming independent parallelism. More generally, the bottleneck depends on the consensus
protocol and deployment (e.g., slowest supermajority/quorum for finality, committee rules, or network-latency effects).
Any such variant can be captured by replacing $\min$ with a monotone ``effective supply'' aggregator over
$\{\text{gas}_o:o\in\mathcal{OP}_c\}$, without changing the crux of the optimization and incentive structure.}
 $$\text{Supply}_{c} = \min_{o \in \mathcal{OP}_{c}} \text{gas}_{o}\ ,$$
    \item Gas actually processed: $$\text{Gas}_{c} = \min(\text{Demand}_{c}, \text{Supply}_{c})\ .$$
\end{itemize}

These properties are induced by the assignments and declared inputs, 
and characterize the capacity of each ephemeral chain.

\subsection{Constraints}

Each assignment must satisfy basic feasibility conditions.

\paragraph{Assignment.} 
Each agent can join at most one chain: formally,
$\sum_{c \in \mathcal{CHAIN}} x_{a,c} \leq 1$ for each $a \in \mathcal{APP}$ and $\sum_{c \in \mathcal{CHAIN}} y_{o,c} \leq 1$ for each $o \in \mathcal{OP}$.

\paragraph{Stake feasibility.}
The total operator stake on each chain must cover the maximum requirement among applications on that chain: formally, $\text{Stake}_{c} \;\geq\; \max_{a \in \mathcal{APP}_{c}} \text{stake}_{a}$.
This reflects security requirements: applications request a minimum level of stake-supplied collectively by operators to protect their execution.\footnote{This is a simplification: applications declare the stake they would require if placed alone on a chain, yet the actual security demand naturally grows with the number and size of co-located apps. A more involved definition could scale requirements accordingly, but we leave this extension for future work.}

\paragraph{Gas price feasibility.}
We introduce a decision variable $\text{gasprice}_{c}$ for each chain $c$, that must lie between operator requirements and application offers (for nonempty chains): $\max_{o \in \mathcal{OP}_{c}} \text{gasprice}_{o} 
\;\leq\; \text{gasprice}_{c} 
\;\leq\; \min_{a \in \mathcal{APP}_{c}} \text{gasprice}_{a}$.
This variable plays the role of a \emph{clearing price}: applications bid by declaring their maximum acceptable gas price, 
operators set asks via their minimum acceptable price, and the chain price must fall within the feasible interval.

\subsection{Utilities}

We define utilities for applications, operators, and the system.

\paragraph{Application utility.}
Each application $a$ derives value from the fraction of its declared demand that is actually computed.\footnote{Indeed, applications also prefer low gas prices, and a penalty term can naturally be included in their utility. For methodological clarity, we introduce this only later, once the multiobjective structure is in place.} We model proportional allocation when chain demand exceeds supply:
\[
U^{\text{app}}_{a} \;=\; 
\sum_{c \in \mathcal{CHAIN}} x_{a,c}\;\frac{\text{gas}_{a}}{\text{Demand}_{c}}\;\text{Gas}_{c}.
\]
(Note that this equals $0$ if $a$ is unassigned.)

\paragraph{Operator utility.}
Each operator $o$ values its \emph{yield} (per stake), namely fees earned per unit of staked capital.
Let the total fee on chain $c$ be $\text{Fee}_{c}=\text{gasprice}_{c}\cdot \text{Gas}_{c}$ and recall
$\text{Stake}_{c}=\sum_{o\in\mathcal{OP}_{c}}\text{stake}_{o}$.
We model stake-proportional revenue sharing within a chain, i.e., operator $o$ receives the fraction
$\text{stake}_{o}/\text{Stake}_{c}$ of $\text{Fee}_c$.\footnote{Note that different revenue distributions can be implemented within the model; see discussion later.} Thus,
\[
U^{\text{op}}_{o} \;=\;
\sum_{c\in\mathcal{CHAIN}}
y_{o,c}\cdot
\frac{\text{Fee}_{c}\cdot (\text{stake}_{o}/\text{Stake}_{c})}{\text{stake}_{o}}
\;=\;
\sum_{c\in\mathcal{CHAIN}}
y_{o,c}\cdot
\frac{\text{Fee}_{c}}{\text{Stake}_{c}}.
\]
(Note that this equals $0$ if $o$ is unassigned.)

\paragraph{System utility.}
At the system level we measure aggregate throughput as total fees (as this corresponds to user usage):
\[
U^{\text{sys}} \;=\; \sum_{c \in \mathcal{CHAIN}} 
\text{gasprice}_{c} \cdot \text{Gas}_{c}.
\]

\subsection{Objective}

To make the different utilities comparable, we normalize them to the range $[0,1]$ 
using bounds; these bounds can be obtained analytically from the model 
(e.g., maximum application utility when demand is fully met, or maximum system utility when all gas is processed at the highest feasible price), or estimated empirically from historical runs to avoid unrealistic scaling.

Let $\widehat{U}^{\text{app}}_{a}$, $\widehat{U}^{\text{op}}_{o}$, 
and $\widehat{U}^{\text{sys}}$ denote the normalized utilities. 
The global objective is then to maximize a weighted combination:
\[
\max \;\;
\lambda_{\text{app}} \cdot \sum_{a \in \mathcal{APP}} \frac{\widehat{U}^{\text{app}}_{a}}{|\mathcal{APP}|}
\;+\;
\lambda_{\text{op}} \cdot \sum_{o \in \mathcal{OP}} \frac{\widehat{U}^{\text{op}}_{o}}{|\mathcal{OP}|}
\;+\;
\lambda_{\text{sys}} \cdot \widehat{U}^{\text{sys}},
\]
where $\lambda_{\text{app}}, \lambda_{\text{op}}, \lambda_{\text{sys}}$ are governance parameters, determining the relative importance of applications, operators, and system performance; we require  $\lambda_{\text{app}}, \lambda_{\text{op}}, \lambda_{\text{sys}} \geq 0$ 
and $\lambda_{\text{app}} + \lambda_{\text{op}} + \lambda_{\text{sys}} = 1$.

\subsection{Theoretical Intractability}

The resulting problem is a structured multiobjective optimization, where the different objectives capture the perspectives of applications, operators, and the system, and their relative weights can be tuned by governance to reflect policy priorities.

Not surprisingly, it is generally NP-hard, as we show next.

\begin{proposition}[NP-hardness]
Even in the core model, the following decision problem is NP-hard:
given an instance and a threshold $T$, decide whether there exists a feasible
assignment with $U^{\text{sys}}\ge T$ (for $\lambda_{\text{sys}}=1$).
\end{proposition}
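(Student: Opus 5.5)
We will reduce from \textsc{Bin Packing}, or equivalently from \textsc{3-Partition} if strong NP-hardness is wanted. Given item sizes $s_1,\dots,s_n$ (positive integers), $k$ bins of capacity $B$, and $\sum_i s_i = kB$, we build an instance with one application $a_i$ per item and $k$ operators $o_1,\dots,o_k$. Application $a_i$ gets $\text{gas}_{a_i}=s_i$, $\text{stake}_{a_i}=1$ and $\text{gasprice}_{a_i}=1$. Each operator gets $\text{gas}_{o}=B$, $\text{stake}_o=1$ and $\text{gasprice}_o=1$. The threshold is $T=kB$. With all price bounds equal to $1$, every nonempty chain must have $\text{gasprice}_c=1$. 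One operator already gives stake $1$, which covers every application's requirement, so stake feasibility holds on any chain that has at least one operator. Hence $U^{\text{sys}}=\sum_c \min(\text{Demand}_c,\text{Supply}_c)$.

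\emph{Forward direction.} Given a packing, open one chain per bin and place one operator on each chain. The chain for bin $j$ has supply $B$ and demand at most $B$. Since the items fill exactly $kB$ in total, every bin is exactly full, so the processed gas is $\sum_i s_i = kB = T$.

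\emph{Reverse direction.} Suppose some assignment achieves $U^{\text{sys}}\ge kB$. Each chain's supply is the minimum of the capacities of its operators, which is $B$, but the chain uses up at least one operator. So at most $k$ chains receive nonzero gas, and each processes at most $B$. Total processed gas is also at most $\sum_i \text{gas}_{a_i}$ over assigned applications, which is at most $kB$. Reaching $kB$ therefore forces two things: every application is assigned, and exactly $k$ chains each have demand $\ge B$, each with exactly one operator. Let $D_c$ be the demand of chain $c$. We have $\sum_c D_c = kB$ with each $D_c\ge B$, so $D_c=B$ for every chain. This gives a valid bin packing, indeed an exact one.

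Two points need care. The first is that chains with applications but no operators must be ruled out, or shown harmless. The stake constraint $\text{Stake}_c\ge 1$ already excludes them, and in any case such a chain's supply would be defined as a minimum over an empty set, so we treat such chains as infeasible. The second is that the reduction is polynomial. It is, and if we start from \textsc{3-Partition} with unary-bounded numbers we even obtain strong NP-hardness. The main obstacle is the counting argument in the reverse direction: the processed gas $\min(\text{Demand},\text{Supply})$ allows overloaded chains to lose demand silently, so we must argue via the sum identity rather than a per-chain capacity constraint. The argument above handles this, because equality in the chain of inequalities forces zero loss on every chain.
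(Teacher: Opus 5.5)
Your proof is correct and is essentially the paper's construction: the paper reduces from \textsc{Partition} with two operators of capacity $B$, all prices equal to $1$, and $T=2B$, which is your argument specialized to $k=2$. Your $k$-operator version from \textsc{3-Partition} additionally gives strong NP-hardness. Your choice $\text{stake}_a=1$ also respects the model's requirement $\text{stake}_a>0$, whereas the paper sets $\text{stake}_i=0$, and it makes operator-less chains infeasible, a case the paper only dismisses by asserting such chains contribute $0$.
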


\begin{proof}
We reduce from the NP-hard problem \textsc{Partition}~\cite{biblenp}. Given an instance of \textsc{Partition} - containing numbers $a_1,\dots,a_n$ with $\sum_{i=1}^n a_i = 2B$, and where the task is to decide whether the numbers can be partitioned into two groups, each summing up to $B$ - construct an instance with applications
$\mathcal{APP}=\{1,\dots,n\}$ and operators $\mathcal{OP}=\{1,2\}$.
For each application $i\in\mathcal{APP}$ set
$\text{gas}_{i}=a_i$, $\text{stake}_{i}=0$, and $\text{gasprice}_{i}=1$.
For each operator $j\in\mathcal{OP}$ set
$\text{gas}_{j}=B$, $\text{stake}_{j}=1$, and $\text{gasprice}_{j}=1$.
Set $\lambda_{\text{sys}}=1$ (and $\lambda_{\text{app}}=\lambda_{\text{op}}=0$), and set $T=2B$.

Consider any feasible assignment.
Because each operator may be assigned to at most one chain, at most two chains
can have $\mathcal{OP}_c\neq\emptyset$; all other chains (if any) have no operators
and contribute $0$ to $U^{\text{sys}}$.
Moreover, on any chain with $\mathcal{OP}_c\neq\emptyset$, gas-price feasibility implies
$\text{gasprice}_c=1$ since all declared gas prices equal $1$.
Thus
\[
U^{\text{sys}}
=\sum_{c\in\mathcal{CHAIN}}\text{gasprice}_c\cdot \text{Gas}_c
=\sum_{c\in\mathcal{CHAIN}}\text{Gas}_c,
\quad
\text{Gas}_c=\min(\text{Demand}_c,\text{Supply}_c).
\]
If a chain has both operators, then $\text{Supply}_c=\min(B,B)=B$, so the total processed
gas across all chains is at most $B$ and cannot reach $T=2B$; while, if a chain $c$ has exactly one operator, then $\text{Supply}_c=\text{gas}_j=B$.
Hence, any assignment with $U^{\text{sys}}\ge 2B$ must place the two operators on two
\emph{different} chains, call them $c_1$ and $c_2$, each with $\text{Supply}_{c_k}=B$.

Therefore,
\[
U^{\text{sys}}=\text{Gas}_{c_1}+\text{Gas}_{c_2}
=\min(\text{Demand}_{c_1},B)+\min(\text{Demand}_{c_2},B)
\le B+B=2B.
\]
We have $U^{\text{sys}}=2B$ iff $\text{Demand}_{c_1}\ge B$ and $\text{Demand}_{c_2}\ge B$.
Since $\text{Demand}_{c_1}+\text{Demand}_{c_2}=\sum_i a_i=2B$, this holds iff
$\text{Demand}_{c_1}=\text{Demand}_{c_2}=B$, i.e., if and only if the applications can be split into
two sets whose gas demands each sum to $B$. This is exactly a \textsc{Partition} solution.
\end{proof}

\subsection{Practical Solvability}

Despite the theoretical intractability, the optimization problem can be solved with standard tools such as mixed-integer linear programming solvers (e.g., Gurobi~\cite{gurobi}) or with metaheuristics (e.g., as implemented in Nevergrad~\cite{bennet2021nevergrad}). 
Furthermore, although the optimization is combinatorial, it has exploitable structure: feasible configurations induce sparse assignments (each agent joins at most one chain), and the search space contains large symmetries that can be collapsed by canonicalization. The problem is naturally warm-started across epochs---previous assignments provide a high-quality baseline---and supports incremental re-optimization when only a subset of bids or capacities change. In practice, engineered solver behavior (memoization, symmetry reduction, constraint-aware penalization) improves performance, and parallelization applies both to MILP solvers (e.g., commercial MIP solvers) and to population-based/metaheuristic evaluation. Since reconfiguration can be scheduled on coarse timescales (hours to a day), there is typically sufficient time for high-quality off-chain computation.

\begin{remark}
In particular, we envision two complementary implementation paths. First, a single off-chain solver can compute an allocation and attach an on-chain-verifiable certificate of feasibility and objective value (``compute off-chain, verify on-chain''); a practical caveat is that certifying auxiliary components such as ML-based forecasts or anomaly detection may be heavier than certifying the allocation itself, so such inputs may initially be treated as auditable (governance-supervised) rather than fully proven. 
Second, the protocol may support a competitive ``solution market'' in which multiple independent solver families submit candidate solutions and on-chain verification selects the best feasible one. A main subtlety in this market view is tie-handling: when multiple optimal (or near-optimal) solutions exist, selecting uniformly among them is generally intractable without enumerating the optimal set; thus uniformity is best viewed as an idealization, approximated via verifiable randomized tie-breaking over a canonical representation.
\end{remark}

\paragraph{Bilevel structure.}
Moreover, beyond generic solver improvements, the optimization admits a natural
\emph{bilevel decomposition}--the decomposition separates the problem into two levels: the outer level
assigns applications and operators to chains (discrete), while the inner
level determines the clearing gas price $p_c$ for each chain given those
assignments (continuous). Formally, the bilevel formulation is
\[
\max_{\mathbf{x},\mathbf{y}} \; F\bigl(\mathbf{x},\mathbf{y},
\mathbf{p}^*(\mathbf{x},\mathbf{y})\bigr)
\quad\text{where}\quad
\mathbf{p}^*(\mathbf{x},\mathbf{y})
\in \arg\max_{\mathbf{p}} F(\mathbf{x},\mathbf{y},\mathbf{p})
\]
subject to the assignment and price-feasibility constraints.

A key structural observation is that the objective is
\emph{chain-separable} with respect to price:
$F(\mathbf{x},\mathbf{y},\mathbf{p})
= \sum_c F_c(\mathbf{x},\mathbf{y},p_c) + G(\mathbf{x},\mathbf{y})$,
where $G$ collects all price-independent terms (including the diversity
penalty). The inner optimization therefore decomposes into independent
per-chain subproblems. Moreover, in all current model variants—core
model with or without application price penalty, overprovisioning,
capability compatibility, diversity, and stability (downtime and gas
degradation)—the per-chain objective is affine in $p_c$, so the optimum
is attained at a boundary of the feasible interval and can be computed
analytically in $O(1)$ per chain. (The only exception is the
fee-degradation extension, where $\Delta^{\text{fee}}_a$ introduces a
piecewise-linear dependence on $p_c$ with breakpoints at previous-epoch
prices; in this case the optimum is at a boundary or breakpoint, still
analytical but $O(k)$ per chain where $k$ is the number of applications. And, indeed, other extensions to the model may break this bilevel separation.)


\subsection{Multiple Optimal Solutions}
For fairness (as discussed next), when multiple optimal solutions exist, an idealized rule is to select an optimal assignment uniformly at random and define compensation relative to the resulting ex-ante expectations. Exact uniform sampling over the full optimal set is generally intractable, so in practice we approximate it via randomized tie-breaking and repeated solver runs (Monte Carlo), sampling uniformly from the distinct optimal (or $\varepsilon$-optimal) assignments encountered. This yields an auditable, expectation-based fairness proxy and enables a simple compensation rule based on the gap between realized and estimated expected utility.\footnote{Note that such compensations need not mean app-to-app monetary transfers, as these can be managed by the system using app collateral.}

\begin{remark}[Balancing fairness, efficiency, and dynamics]
While the model maximizes a weighted utility sum, different optimal assignments may distribute utility very differently across agents, with implications for fairness, security, and long-term dynamics.

\begin{itemize}
    
\item\emph{Applications.} Fewer apps per chain improve per-app satisfaction but reduce coverage; more apps per chain increase diversity but may spread resources thinly. Fairness can
be immediate (all get something each epoch) or in expectation (rotation across epochs). Both patterns risk losing applications over time, either because they are persistently under-served or because they occasionally receive nothing.

\item
\emph{Operators.} More operators per chain improve decentralization and stake security but reduce efficiency. Fairness can be immediate (yield distributed each epoch) or
in expectation (rotation over time). Expectation-based fairness risks exit after long low-yield periods.

\item
\emph{Governance levers.} The system may impose minimum service guarantees for apps, minimum stake diversity, or yield thresholds for operators to mitigate agent dropout.

\item
\emph{Future objectives.} Alternatives such as Nash product~\cite{caragiannis2019unreasonable}, egalitarian max-min, or proportional fairness could better capture these tradeoffs, and dynamic models may explicitly integrate dropout probabilities and decentralization constraints.

\end{itemize}

\end{remark}

\subsection{Properties}\label{section:properties}

The core model exhibits several key properties.

\paragraph{Adaptivity.}
Assignments are recomputed each epoch from fresh declarations. Chains thus reconfigure automatically as demand, capacity, stake, or economic preferences change.

\paragraph{Efficiency.}
The optimizer selects a configuration that maximizes the weighted objective. 
No feasible reassignment can improve the outcome under the chosen weights, 
so the solution is Pareto efficient with respect to applications, operators, and the system.

\paragraph{Fairness.}
On a single chain, applications share gas proportionally when demand exceeds supply, ensuring equal relative treatment. 

And, as mentioned, when multiple optimal assignments exist, one is chosen uniformly at random (ensuring ex-ante fairness for ops and apps alike); and, to avoid immediate disadvantage to apps, apps are compensated by the difference between their realized and expected utilities (we do not do this for operators as these are typically large infrastructure entities that can tolerate yield fluctuations across epochs).

So, we define the compensation for application $a$ as
\(
comp_a = \mathbb{E}[U^{\text{app}}_a] - U^{\text{app}}_a
\),
with positive values corresponding to compensation transfers, negative to contributions, summing to zero; and where the expectation is over the uniform distribution on all optimal assignments.\footnote{Exact expectations over all optimal assignments are generally intractable; in practice we approximate them by Monte Carlo sampling of $\varepsilon$-optimal solutions with randomized tie-breaking, using the sample mean as $\widehat{\mathbb{E}}[U^{\text{app}}_a]$. This provides an efficient, auditable estimate for compensation values.} This ensures that each application receives its utility.
Moreover,
\(
\sum_a comp_a = 0
\),
so gains and losses exactly cancel, preserving efficiency within and across epochs.

\paragraph{Incentives.}
For applications, declaring true gas demand and gas price is the safest strategy: overstating risks infeasibility or penalties backed by collateral (as the system can fine/slash applications/operators that do not deliver), understating risks exclusion or under-service. 
Operators similarly benefit from truthful declarations. 
While full strategyproofness cannot be guaranteed—agents may attempt to collude 
or manipulate bids—the clearing price depends on global assignments, 
which limits the impact of unilateral deviations. 
This aligns incentives toward honest reporting while preserving efficiency.

\begin{remark}
Note that we do not expect full (and especially group) strategyproofness under multi-dimensional declarations (demand/capacity/price/requirements) and hard feasibility constraints. We leave stronger theoretical incentive guarantees, including group strategyproofness concerns and multi-epoch equilibrium analysis to future work.

We emphasize, however, practical mitigations against adversarial behavior:
\begin{enumerate}

\item
identifiability (e.g., registry- or key-based) enables accountability and makes some forms of collusion and repeated coordinated deviations partially detectable;

\item
monitoring based on public execution traces can flag suspicious patterns (e.g., systematic slack exploitation or correlated under-reporting), with ML-based anomaly detection treated as an auxiliary, governance-supervised signal; and,

\item
misreporting gas, capabilities, or capacity can be discouraged by explicit, collateral-backed penalties (slashing) when deviations are observed.

\end{enumerate}
\end{remark}

\subsection{Example}

Consider a toy example with the following agent declarations:
\begin{align*}
&a_1:\ \gas=120,\ \stake=60,\ \gasprice^{\max}=12, \\
&a_2:\ \gas=80,\ \stake=20,\ \gasprice^{\max}=8, \\
&o_H:\ \gas=150,\ \stake=60,\ \gasprice^{\min}=6, \\
&o_L:\ \gas=150,\ \stake=20,\ \gasprice^{\min}=9.
\end{align*}

\paragraph{Assignment A (both apps with $o_H$ on chain $0$).}
Price interval: $6 \le \gasprice_0 \le 8$, then a feasible gas price is, e.g., $\gasprice=8$.
Demand$_0 =200$, Supply$_0 =150$, so proportional service:
$a_1$ gets $90$, $a_2$ gets $60$.
Fees: $8\cdot150=1200$.
Operator yield: $1200/60=20$.

\paragraph{Assignment B (only $a_1$ with $o_H$).}
Price interval: $6 \le p \le 12$, take $p=12$.
Demand $=120$, supply $=150$, so $a_1$ fully served with $120$.
Fees: $12\cdot120=1440$.
Operator yield: $1440/60=24$.

\paragraph{Discussion.}
Assignment~A serves both apps but gives lower operator yield.
Assignment~B drops $a_2$, raises the price, and improves yield.
Weights are crucial: Application-centric weights favor A (as the total application utility is greater) while operator-centric weights favor~B.
More generally, observe the decentralization level regarding application and operator utilities distribution.

\section{Modular Extensions}

The core model described captures the essential tradeoff between applications, operators, and the system. In practice, however, real blockchains introduce additional requirements: future gas demands may not be declared accurately, applications rely on specific execution features, ecosystems benefit from diversity of app types, and reconfiguration across epochs carries costs. To address these, we add modular extensions. Each can be introduced independently, preserving core properties. In Section~\ref{section:unified} we demonstrate their addition to the overall optimization.

The extensions presented below were selected as significant representatives of the modular optimization design. Further relevant extensions are possible and are discussed in Section~\ref{section:outlook}.

\subsection{Gas Overprovisioning}\label{section:overprovisioning}

Applications may not be able to predict their exact future gas demand, leading to potential under-declaration and sudden bursts of activity that cannot be served. To mitigate this, we incorporate gas overprovisioning into the allocation via a tunable slack parameter $\gamma>1$, 
used to cushion under-declaration and absorb short-term bursts. 

\paragraph{Slack-adjusted chain gas.}
For each chain $c\in\mathcal{CHAIN}$, redefine the overprovisioned gas as
\[
\text{Gas}^{(\gamma)}_c \;=\; \min\!\bigl(\,\gamma\cdot \text{Demand}_c,\; \text{Supply}_c\,\bigr)\,.
\]

Note that this affects \emph{only} the application base utility:
\[U^{\text{app}}_a \;=\;
\sum_{c\in\mathcal{CHAIN}}
x_{a,c}\;\frac{\text{gas}_a}{\text{Demand}_c}\;\text{Gas}^{(\gamma)}_c\ .\]

\paragraph{Remarks.}
\begin{itemize}
    \item \textbf{Baseline.} When $\gamma=1$, the model reverts to the baseline behavior.
    \item \textbf{Trade-off.} When $\gamma>1$, the system provisions more than the total declared demand. This can inflate per-app utility even for truthful declarations and may lead to underutilization of operator resources. The key trade-off is robustness to under-declaration versus efficiency.
    \item \textbf{Calibration.} Accurate learning or calibration of $\gamma$ (e.g., from historical fill rates or ML forecasts) is critical for balancing fairness and resource utilization.
\end{itemize}

\subsection{Capability Compatibility}

Applications may require specific computational features, while operators differ in what they support: different execution environments (e.g., EVM), privacy mechanisms (e.g., based on zero-knowledge proofs), or throughput modes (e.g., secure but slower vs.\ fast-finality). We encode these as capability vectors, and a chain is feasible only if operator capabilities satisfy all application requirements on it.  

Formally, we fix a set of capability dimensions indexed by $d \in \{1,\dots,D\}$. Each application $a \in \mathcal{APP}$ declares a capability vector $\text{cap}_{a,d} \in \{-1,0,1\}$, where: $1$ ($0$, $-1$) indicates a need for a capability enablement (disablement, indifference, resp.).
Each operator $o \in \mathcal{OP}$ declares a capability vector 
$\text{cap}_{o,d} \in \{-1,0,1\}$, where: $1$ ($0$, $-1$) indicates support (lack of support, indifference, resp.).

Each chain $c \in \mathcal{CHAIN}$ is associated with a binary capability $\text{cap}_{c,d} \in \{0,1\}$ for each dimension $d$. Feasibility requires that:
\begin{itemize}
    \item If application $a$ is assigned to chain $c$, then $$\text{cap}_{a,d} = 1 \;\Rightarrow\; \text{cap}_{c,d}=1$$
    and 
    $$\text{cap}_{a,d} = 0 \;\Rightarrow\; \text{cap}_{c,d}=0\ .$$
    \item If operator $o$ is assigned to chain $c$, then
    $$\text{cap}_{c,d}=1 \;\Rightarrow\; \text{cap}_{o,d} \in \{1,-1\}$$ and $$\text{cap}_{c,d}=0 \;\Rightarrow\; \text{cap}_{o,d} \in \{0,-1\}\ .$$
\end{itemize}

\subsection{Application-Type Diversity}

Ecosystem health depends on a balanced mix of application domains, not just concentration in one use case (e.g., perpetual futures). We model this through a target distribution over application types and penalize deviations, capturing governance goals for resilience/diversity. 

Let $\mathcal{T}=\{1,\dots,T\}$ be the set of application types, and let each application $a \in \mathcal{APP}$ have a declared type $\text{type}_a \in \mathcal{T}$.\footnote{Types can be in principle be selected via ML and governance.} Governance specifies a target diversity vector $\vec{d}^{\text{target}}=(d^{\text{target}}_1,\dots,d^{\text{target}}_T)$ with $\sum_{t=1}^T d^{\text{target}}_t=1$. Given a concrete assignment, let 
$d^{\text{real}}_t = \tfrac{1}{|\mathcal{APP}|}\sum_{a \in \mathcal{APP}} \mathbf{1}[\text{type}_a = t]$ for all $t \in \mathcal{T}$ and define the penalty as 
$\text{DiversityPenalty} = \|\vec{d}^{\text{real}} - \vec{d}^{\text{target}}\|_1$, which decreases as the realized mix of applications approaches the desired distribution.

\subsection{Epoch-to-Epoch Stability}

Reassignments between epochs enable adaptation but cause downtime, operator reconfiguration costs, and service degradation. For instance, an app may be temporarily offline during chain migration, or an operator may reinstall a new execution environment. We capture these effects with historical assignment variables and penalties, stabilizing assignments at the expense of some efficiency.  

\paragraph{Historical variables and derived changes.}
Let $x^{\text{prev}}_{a,c},y^{\text{prev}}_{o,c}\in\{0,1\}$ denote previous-epoch assignments. For each chain $c$, define $$\text{AddedApps}_c=\sum_a\max(0,x_{a,c}-x^{\text{prev}}_{a,c})$$ and $$\text{RemovedApps}_c=\sum_a\max(0,x^{\text{prev}}_{a,c}-x_{a,c})\ .$$ For each operator $o$, let $$\text{moved}_o=\mathbf{1}[\sum_c|y_{o,c}-y^{\text{prev}}_{o,c}|>0]\ .$$

\paragraph{Operator reconfiguration time and chain downtime.}
Here, we have $\delta^{\text{rem}},\delta^{\text{add}},\delta^{\text{env}}\ge 0$ as fixed parameters representing, respectively, the time to remove one application, to add one application, and to reinstall a new execution environment when an operator switches chains. 

The reconfiguration time of operator $o$ is 
\begin{align*}
    \text{Computetime}_o &= \sum_c y^{\text{prev}}_{o,c}\delta^{\text{rem}}\text{RemovedApps}_c \\ &+\sum_c y_{o,c}\delta^{\text{add}}\text{AddedApps}_c \\ &+\delta^{\text{env}}\cdot\text{moved}_o\ .
\end{align*}
The downtime of chain $c$ is $\text{Downtime}_c=\max_{o:y_{o,c}=1}\text{Computetime}_o$, propagating further as $\text{DowntimeCost}_a=\sum_c x_{a,c}\text{Downtime}_c$ and $\text{DowntimeCost}_o=\sum_c y_{o,c}\text{Downtime}_c$.

\paragraph{Service degradation penalties.}
For application $a$, let previous/current declared gas be $\text{gas}^{\text{prev}}_a,\text{gas}_a>0$ and previous and current max (declared) gas prices $\text{gasprice}^{\text{prev}}_a,\text{gasprice}_a>0$. The assigned gas is $\text{GasAssigned}_a=\sum_c x_{a,c}\,\tfrac{\text{gas}_a}{\text{Demand}_c}\text{Gas}_c$, with previous analogue $\text{GasAssigned}^{\text{prev}}_a$. The chain-clearing prices seen by $a$ are $\text{gaspriceAssigned}^{\text{prev}}_a=\sum_c x^{\text{prev}}_{a,c}\text{gasprice}^{\text{prev}}_c$ and $\text{gaspriceAssigned}_a=\sum_c x_{a,c}\text{gasprice}_c$. We are interested in the possible decline of service, thus define $$\Delta^{\text{gas}}_a=\max\{0,\tfrac{\min(\text{GasAssigned}^{\text{prev}}_a,\text{gas}^{\text{prev}}_a)}{\text{gas}^{\text{prev}}_a}-\tfrac{\min(\text{GasAssigned}_a,\text{gas}_a)}{\min(\text{gas}_a,\text{gas}^{\text{prev}}_a)}\}$$ and $$\Delta^{\text{fee}}_a=\max\{0,\tfrac{\text{gaspriceAssigned}_a}{\text{gasprice}_a}-\tfrac{\text{gaspriceAssigned}^{\text{prev}}_a}{\text{gasprice}^{\text{prev}}_a}\}\ .$$ The degradation cost is $\text{DegradationCost}_a=\alpha^{\text{gas}}\Delta^{\text{gas}}_a+\alpha^{\text{fee}}\Delta^{\text{fee}}_a$ (with $\alpha^{\text{gas}},\alpha^{\text{fee}}\ge 0$ being system-set weights determining the relative importance of gas-based and fee-based degradation).

\subsection{Unified Multiobjective Aggregation}\label{section:unified}

As the modules introduce heterogeneous quantities, we compile them into one objective. Utilities (application satisfaction, operator yield, system throughput) and penalties (downtime, degradation, diversity deviation) operate on different scales, so we normalize all terms to $[0,1]$ and invert penalties into rewards. Governance weights then control the balance across applications, operators, and the system for the complete multiobjective optimization.  
This unified formulation enables consistent comparison and aggregation of diverse metrics, ensuring that no single component dominates due to scale differences.
By adjusting these weights between epochs, governance can dynamically shift priorities—for example, emphasizing system stability during peak load or operator fairness during resource scarcity—thereby linking technical optimization to policy-level objectives.

\paragraph{Normalization.}\footnote{We use min--max scaling as a simple baseline to make heterogeneous objectives comparable. An important next step is to study more robust alternatives, including normalization via quantile- or variance-based scaling, and other techniques studied in the multiobjective optimization literature}
To enable comparability across heterogeneous terms, we normalize every quantity $Q$ to the unit interval using min–max scaling. Given conservative lower and upper bounds $Q^{\min}$ and $Q^{\max}$ (derived analytically, from stress-tested simulations, or from historical runs), we define $$\text{Norm}(Q) \;=\; \frac{Q - Q^{\min}}{\,Q^{\max} - Q^{\min}\,}$$ and $$\overline{\text{Norm}}(Q) \;=\; 1 - \text{Norm}(Q)\ .$$
Thus, utilities are expressed as positive rewards in $[0,1]$, and penalties are transformed into rewards via their complement. This ensures all components share a common range and orientation, allowing them to be combined transparently in the multiobjective optimization.

\paragraph{Applications}
Formally, we define the final utility of each agent type. For applications, first, We define the gas price cost of an app relative to the clearing gas price of the chain it is assigned to (so $c$ is the chain that the app $a$ is assigned to):
\[
\text{PriceCost}_{\text{a}} \;=\;
\frac{\text{gasprice}_{\text{c}}}{\text{gasprice}_{\text{a}}}.
\]
Note that it is always in \((0,1]\) by feasibility.

Then, each $a$ specifies weights $$\omega^{\text{base}}_a,\omega^{\text{downtime}}_a,\omega^{\text{degradation}}_a,\omega^{\text{price}}_a\in[0,1]$$ with sum $1$, and its overall utility is then
\begin{align*}
U^{\text{final}}_a \;=\;& \omega^{\text{base}}_a \,\text{Norm}(\text{AppBaseUtil}_a) \\
&+ \omega^{\text{price}}_a \,\overline{\text{Norm}}(\text{PriceCost}_a) \\
&+ \omega^{\text{downtime}}_a \,\overline{\text{Norm}}(\text{DowntimeCost}_a) \\
&+ \omega^{\text{degradation}}_a \,\overline{\text{Norm}}(\text{DegradationCost}_a)\ .
\end{align*}

\paragraph{Operators}
For operators, utility is measured directly in monetary terms: it reflects fees earned per epoch per unit of staked capital, adjusted for downtime; in particular, $\text{OpBaseUtil}_o / n_o$ represents lost monetary utility per epoch, thus multiplying it by $\text{Downtime}_o$ represents lost utility due to downtime per epoch. Thus,
\[
U^{\text{final}}_o=\tfrac{1}{\text{stake}_o}\Bigl(\text{OpBaseUtil}_o-\tfrac{\text{Downtime}_o \cdot \text{OpBaseUtil}_o}{n_o}\Bigr),
\]
where $\text{OpBaseUtil}_o$ is the operator’s fee revenue (denoted $U_o^{op}$ above), $\text{Downtime}_o$ is the downtime of its chain, and $n_o$ is the average number of epochs between configuration changes for operators (as can be estimated empirically from recent history or via ML forecasts). This adjustment penalizes operators proportionally when reconfiguration reduces availability. 

\paragraph{System}
For the system, utility aggregates throughput, stake attraction, and diversity with weights $\omega^{\text{sys}}_{\text{core}},\omega^{\text{sys}}_{\text{stake}},\omega^{\text{sys}}_{\text{diversity}}$ (summing to $1$): \begin{align*}
U^{\text{sys}} \;=\;& \omega^{\text{sys}}_{\text{core}} \,\text{Norm}(\text{SysBaseUtil}) \\
&+ \omega^{\text{sys}}_{\text{stake}} \,\text{Norm}(\text{TotalStake}) \\
&+ \omega^{\text{sys}}_{\text{diversity}} \,\overline{\text{Norm}}(\text{DiversityPenalty})\ .
\end{align*}

\paragraph{The optimization expression}
Finally, governance defines global weights $\lambda_{\text{app}},\lambda_{\text{op}},\lambda_{\text{sys}}\ge 0$ with $\lambda_{\text{app}}+\lambda_{\text{op}}+\lambda_{\text{sys}}=1$. The unified optimization problem is $\max\;\;\lambda_{\text{app}}\cdot \tfrac{1}{|\mathcal{APP}|}\sum_a U^{\text{final}}_a+\lambda_{\text{op}}\cdot \tfrac{1}{|\mathcal{OP}|}\sum_o \text{Norm}(U^{\text{final}}_o)+\lambda_{\text{sys}}\cdot U^{\text{sys}}$.  

\paragraph{Interpretation.}
This formulation compiles all modules into a single governance-weighted optimization problem. It balances application satisfaction, operator incentives, and system-level goals, while preserving modularity and ensuring consistent integration.

\section{Simulations}

We describe the simulation performed and report on the results. The goal of the simulations is to demonstrate its ability to control the utilities of the different agent types.
Note that we use synthetic data both because real-world datasets are not yet available and because controlled inputs let us demonstrate the governance effects clearly.

Next we discuss the specific model considered, the specific solver used in the simulations, and the instances that were solved.

\subsection{Specific Model Used}

Next we describe the specific model used, which is the core model with these changes:
\begin{itemize}

\item
\textbf{Mid-range gas price.}
We set the clearing price of a chain \(price_c\) to be the mean between the minimum declared price over the applications and the maximum required price over the operators; i.e.: $$price_c = \Bigl(\min_{a\in \mathcal{APP}_c} price_a \;+\; \max_{o\in \mathcal{OP}_c} price_o\Bigr) / 2.$$

\item
\textbf{Price-penalized application utility.}
To capture the idea that applications prefer lower gas prices (as in the general model), we introduce a mild penalty term that depends on the ratio between the actual clearing price \(price_c\) and the application's declared maximum price \(price_a\).
The ratio \(\frac{price_c}{price_a}\in[0,1]\) ensures that the penalty is \(0\) when the chain price is \(0\) and maximal when the chain price equals the app's declared bound.

For normalization, to keep the utility bounded within \([0,1]\), we define\footnote{We use $0.1$ to cap the utility cost of gasprice at $\%10$. While this is rather arbitrary it can be selected by governance; in the context of this paper it does not affect the general behavior.}
\[
U_a = 0.1 + 0.9 \cdot utilization \;-\; 0.1 \cdot \frac{price_c}{price_a},
\]
where \(utilization\) denotes the share of the app's declared gas that was actually processed.
As gas is used proportionally between the apps, this coincides with the chain utilization:
\[
utilization = \frac{Gas_c}{\text{Demand}_c}
             = \frac{\text{total gas computed on chain}}{\text{total gas demand on chain}}.
\]
The constants \(0.1\) and \(0.9\) provide a small baseline and ensure that \(U_a\in[0,1]\).

\item
\textbf{Operator and system normalization.}
We use $Q^{\min}=0$ throughout. For the \emph{system} (total fees),
a direct feasibility upper bound is
\[
Q^{\text{sys}}_{\max}
\;=\;
\Bigl(\min\bigl(\sum_{o} \gas_o,\ \sum_{a} \gas_a\bigr)\Bigr)\cdot \max_{a} p_{\max}(a),
\]
i.e., total processed gas is bounded by aggregate supply/demand, and the clearing price on any
chain is bounded by the highest declared application cap.

For \emph{operators}, utility is measured as yield (fees per unit stake):
in the core model, any operator assigned to chain $c$ obtains
$U^{\text{op}}_o=\text{Fee}_c/\text{Stake}_c$.
Thus, an appropriate global upper bound for normalization is the maximum feasible total fees
divided by the minimum feasible chain stake:
\[
Q^{\text{op}}_{\max}
\;=\;
\frac{Q^{\text{sys}}_{\max}}{\min_{o}\text{stake}_o},
\]
since for every nonempty chain $c$ we have $\text{Stake}_c=\sum_{o\in\mathcal{OP}_c}\text{stake}_o
\ge \min_{o}\text{stake}_o$ and hence $\text{Fee}_c/\text{Stake}_c\le Q^{\text{sys}}_{\max}/\min_o \text{stake}_o$.
In the simulations we normalize system utility by $Q^{\text{sys}}_{\max}$ and operator yield by
$Q^{\text{op}}_{\max}$.

\end{itemize}

\subsection{Specific Solver Used}

We implemented the optimization model in Python using the \texttt{Nevergrad} library for derivative-free multiobjective optimization. Our simulation framework allows generating synthetic instances of applications, operators, and chains, specifying their gas, stake, and price parameters, and evaluating utilities and constraints under different governance weights. For now, we focus on uniformly random instances to test scalability and qualitative behavior (while indeed stylized, such instances capture the essential combinatorial structure of the allocation problem; richer empirically grounded generators are planned for future work); also, for simplicity, we report on simulations done on the core model (without the extensions). We carefully engineer the solver, as described next:
\begin{itemize}

\item
\textbf{Memoization.} We cache previously evaluated configurations and skip recomputation whenever the optimizer revisits the same point in the search space. This eliminates redundant evaluations and significantly accelerates convergence, especially when the optimizer samples locally or revisits identical combinations. In later stages, we combine this with canonicalization (below) to also detect equivalent configurations under relabeling.

\item
\textbf{Canonicalization.} Since chain identifiers are arbitrary, many allocations are equivalent up to relabeling. To avoid exploring symmetric duplicates, each assignment is converted into a canonical form by renumbering chains contiguously according to their first appearance—akin to a restricted-growth sequence representation. This ensures that equivalent configurations share the same memoization key, effectively collapsing the search space.\footnote{The implementation of this canonicalization resembles a \emph{restricted-growth sequence} (RGS); this general concept provides a canonical encoding of a set partition by assigning integer labels in order of first appearance, thereby removing dependence on arbitrary group identifiers~\cite{rgs}.}

\item
\textbf{Nevergrad optimization.} 
We use the standard Nevergrad optimizer (ParametrizedOnePlusOne).

\end{itemize}

\subsection{Specific Instances Solved}

\paragraph{Instance.}
We consider a single application \(a\) and three operators
\(o_H,o_L^{(1)},o_L^{(2)}\),
with the following parameters:

\begin{itemize}
\item
Application \(a\): demand \(D=100\), price cap \(p_{\max}=10\).
\item
Operators:
each operator has capacity \(50\) (so total capacity \(=150>100\)).
Two \emph{low-floor} operators have \(p_{\min}=0\),
and one \emph{high-floor} operator has \(p_{\min}=\kappa\,p_{\max}\)
for a parameter \(\kappa\in(0,1]\)
(the \emph{price-spread knob}).
\item
Stakes (sum to \(1\)):
the high-floor operator is a \emph{whale} (or the opposite direction) with stake \(\sigma\in(0,1)\)
(the \emph{stake-skew knob}),
and each low-floor operator has stake \((1-\sigma)/2\).

\end{itemize}

We use $\kappa = 0.3$ and $\sigma = 0.9$. Thus, we normalize with \(Q^{sys}_{\max}=D\cdot p_{\max}=1000\); and \(Q^{op}_{\max}=Q^{sys}_{\max} / (1 - \sigma) / 2) = 1000 / ((1 - 0.9) / 2) = 20000\).

\subsection{Simulation Results}

\begin{figure*}[t]
  \centering
  \includegraphics[width=\linewidth]{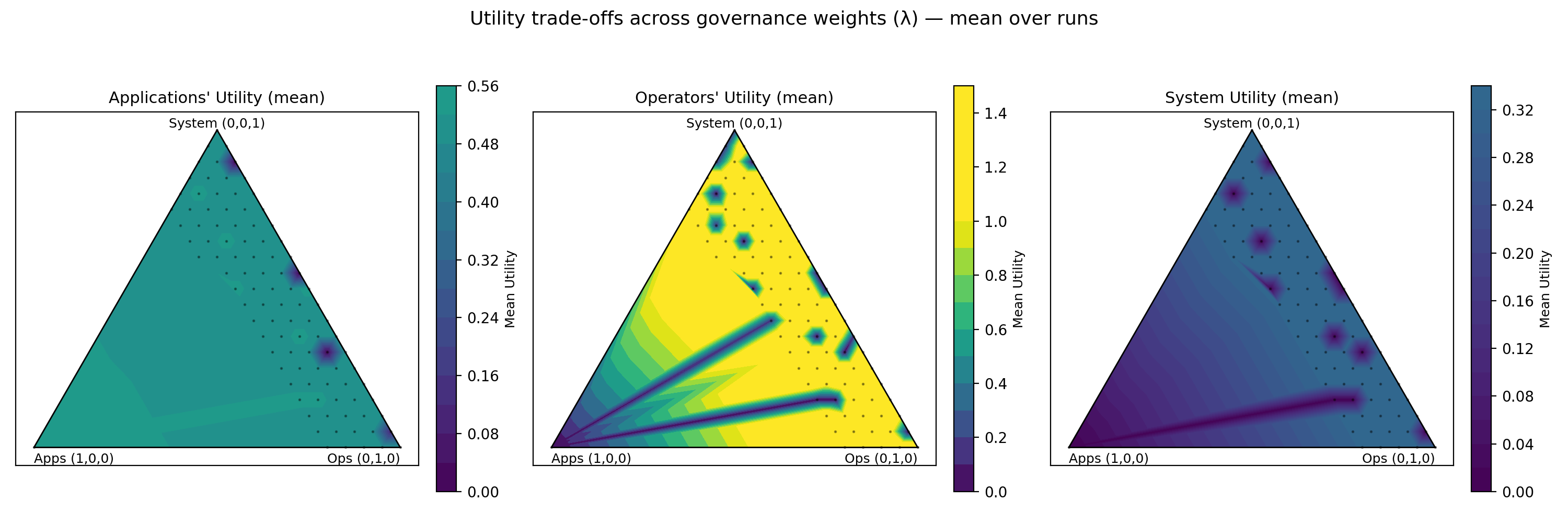}
  \caption{\textbf{Governance control over utilities.} Each triangular panel is a ternary plot over the governance-controlled parameters of multiagent preferences $(\lambda_{\text{app}},\lambda_{\text{op}},\lambda_{\text{sys}})$. From left to right: (i) application utility, (ii) operator utility, (iii) system utility. Color indicates normalized utility (colder~$\to$~lower, warmer~$\to$~higher). Utilities peak near the vertex where the corresponding weight dominates
  (e.g., app utility near $(1,0,0)$), and trade off smoothly along the edges and interior.}
  \label{fig:governancecontrol}
\end{figure*}

To illustrate how governance weights steer outcomes, we sweep the simplex of
$(\lambda_{\text{app}},\lambda_{\text{op}},\lambda_{\text{sys}})$ and record the realized utilities
for applications, operators, and the system. Because the weights satisfy
$\lambda_{\text{app}}+\lambda_{\text{op}}+\lambda_{\text{sys}}=1$, the search space is a 2D simplex,
visualized as an equilateral triangle: the bottom-left, bottom-right, and top vertices correspond to
$(1,0,0)$, $(0,1,0)$, and $(0,0,1)$, respectively. Each point in the simplex corresponds to a different weights, controlling the multiobjective optimization. 

Figure~\ref{fig:governancecontrol} shows the results.

\section{Misalignment in Allocation}\label{section:misalignment}

The simulations above illustrate that governance weights may steer outcomes substantially:
shifting $(\lambda_{\text{app}},\lambda_{\text{op}},\lambda_{\text{sys}})$ redistributes
utility across stakeholder types. A natural follow-up question is:
\emph{when does this steering matter more/less?}
When stakeholders have aligned preferences, governance is trivial---any
reasonable weight vector yields near-identical outcomes, and fine-tuning
weights is a low-stakes exercise.
Conversely, when preferences conflict fundamentally, governance faces
unavoidable tradeoffs: improving one stakeholder's outcome necessarily
degrades another's, and weight selection has real welfare consequences.
\emph{Misalignment}, discussed and formally defined next, quantifies this structural tension for a given
instance, independent of governance's weight choice.

\subsection{Normative Desiderata}

We seek a measure satisfying:
\begin{enumerate}
    \item \textbf{Instance-specific:} It characterizes a given instance
    $I$, not the governance choice of $\lambda$.
    \item \textbf{Scale-invariant:} It uses relative utilities---the
    fraction of solo-best achieved, where ``solo-best'' denotes the
    utility a stakeholder type would obtain if the optimization were
    run solely on its behalf.
    \item \textbf{Interpretable extremes:}
    Zero misalignment means some assignment satisfies all stakeholders
    optimally; high misalignment means no assignment gives all
    stakeholders substantial fractions of their solo-best.
\end{enumerate}
This cleanly separates \textbf{governance policy} (choosing $\lambda$,
a normative and ultimately political decision that may be informed by analysis as done above) from \textbf{instance
hardness} (misalignment, a structural property of the declared
preferences and feasibility constraints).

\subsection{Formal Definitions}

\paragraph{Setup.}
Fix an instance $I$ with feasible assignments $\mathcal{X}(I)$.\footnote{Misalignment is a property of a particular instance. Given a set of instances or a distribution of instances, one may sample from the distribution to get an estimation of the expected misalignment of instances from that distribution.}
Each stakeholder type $k \in \{\text{app}, \text{op}, \text{sys}\}$ has
aggregate utility $U^k(x)$ for $x \in \mathcal{X}(I)$ (for applications
and operators, this is the average utility over agents of that type).

\paragraph{Solo optimum.}
The best achievable utility for type $k$ in instance $I$:
\[
U^k_*(I) \;:=\; \max_{x \in \mathcal{X}(I)}\; U^k(x).
\]

\paragraph{Relative utility.}
The fraction of the solo-best that type $k$ achieves under assignment $x$:
\[
r^k(x) \;:=\; \frac{U^k(x)}{U^k_*(I)} \;\in\; [0,1].
\]

\paragraph{Global misalignment.}
\[
\text{mis}(I) \;:=\; 1 \;-\; \max_{x \in \mathcal{X}(I)}\;
\min_{k \in \{\text{app},\text{op},\text{sys}\}}\; r^k(x)
\;\in\; [0,1].
\]
Thus $\text{mis}(I) = 0$ if and only if some assignment simultaneously
maximizes all three utilities; $\text{mis}(I) \to 1$ when every feasible
assignment leaves at least one stakeholder type near-zero relative
utility.

\subsection{Properties and Interpretation}

\paragraph{Directionality.}
Low misalignment is desirable: it means the instance is ``easy to
govern''---almost any reasonable weight vector yields an outcome where no
stakeholder is badly underserved.
High misalignment signals a structurally contentious instance:
governance must navigate genuine tradeoffs, and in the extreme
($\text{mis}(I) \approx 1$), any assignment necessarily leaves some
stakeholder type near-zero relative utility---a zero-sum-like regime.

\paragraph{Governance attention signal.}
Misalignment provides an actionable diagnostic:
where misalignment is low, governance is forgiving---even naive or
politically-driven weight choices do little harm;
where misalignment is high, governance is critical and fine-tuning
weights has real welfare consequences.
Epochs or instances with high misalignment thus deserve more deliberate
weight calibration, while low-misalignment epochs can be governed
on autopilot.

\paragraph{Misalignment-minimizing governance.}
One can define a meta-objective where, rather than fixing weights, the system selects the assignment that directly minimizes misalignment---i.e.,
maximizes the worst-case relative utility (so, instead of getting the assignment with best utility given some weights, the goal here is to the find the assignment that minimizes the efective misalignment):
\[
x^* \;\in\; \arg\max_{x \in \mathcal{X}(I)}\;
\min_{k}\; r^k(x).
\]
This is an egalitarian max-min rule over relative utilities: a
governance policy that is explicitly instance-adaptive.
It is one natural policy among many; the point is that misalignment
provides the vocabulary to reason about and compare such choices.

\begin{remark}[Instance geometry]
Misalignment is fundamentally about the shape of the Pareto frontier in
relative-utility space. When the frontier passes close to the point
$(1,1,1)$, misalignment is low; when it bows away, misalignment is high.
The governance weight vector $\lambda$ selects a point on this frontier
but cannot reshape it---which is precisely why governance is
``external'' and political in nature: it chooses among the options the
instance affords, but cannot create options that do not exist.
\end{remark}

\begin{remark}[Misalignment dynamics]
Misalignment is not a static property of the system---it evolves as
agents enter or exit, change their declarations, or as market conditions
shift.
An epoch where a high-stake operator demands elevated prices while
applications are price-sensitive will exhibit higher misalignment than an
epoch where preferences happen to be compatible.

Tracking misalignment over time provides a structural diagnostic of the
system's health:
persistently high misalignment may signal fundamental market imbalance
(e.g., chronic under-supply of capacity or systematically mismatched
capabilities),
while transient spikes may reflect temporary shocks that the adaptive
mechanism can absorb in subsequent epochs.
This connects directly to the dynamic analysis direction outlined in
Section~\ref{section:outlook}: understanding how agent adaptation---learning,
entry, exit, and strategic adjustment---shapes misalignment trajectories
across epochs is a natural and important next step.
\end{remark}

\section{Outlook}\label{section:outlook}

We have formulated the problem of adaptively configuring a multichain blockchain infrastructure as a modular multiagent optimization problem and we have demonstrated initial feasibility and tradeoff control. Future research directions include:
\begin{itemize}
  \item \textbf{Empirical observation.} Once implemented in practice (in progress), the framework can be evaluated on real deployments—observing actual declarations of gas, stake, and pricing, and validating that the optimization behaves as expected under real-world demand and operator behavior.
  \item \textbf{Better optimization.} Develop faster and more adaptive optimization methods, including incremental updates between epochs and warm-starting from previous solutions.
  \item \textbf{Extensive simulation.} Evaluate scalability through systematic benchmarking—varying the number of applications, operators, and capability dimensions—and reporting computational runtimes across hardware configurations and solver budgets. Complement this with richer instance generators that reflect realistic blockchain workloads, including heavy-tailed demand and capacity distributions, correlated price–stake profiles, moving beyond the synthetic instances used here.
  \item \textbf{Richer resources.} Extend operators to multiple roles (e.g., validators, provers, archival nodes).
  \item \textbf{Interconnectivity.} Extend the model to capture composability and interdependence among applications, allowing them to express, e.g., co-location preferences.
  \item \textbf{Pricing and MEV.} Support non-uniform fee-to-gas within a chain while mitigating strategic ordering (e.g., order-fairness/MEV controls) and studying welfare–fairness impacts.
  \item \textbf{Stability and latency.} Add explicit stability/latency objectives and ablations on their weights.
  \item \textbf{Incentive analysis.} Formally study the strategyproofness of the mechanism—both for individual operators and coalitions—characterizing the scope for misreporting (gas capacity, costs, stake) and colluding on pricing or availability.
  \item \textbf{Dynamic analysis.} Study the system’s evolution across epochs using both simulation and game-theoretic models, examining how adaptive reassignments, compensation, and agent learning shape long-term equilibria and participation dynamics towards a robust multiagent framework.
\end{itemize}

\bibliography{bib}
\bibliographystyle{plain}

\end{document}